\title{On Finding \texorpdfstring{$\ell$}{l}-th Smallest Perfect Matchings} 
\author{Nicolas {El Maalouly}}{Department of Computer Science, ETH Zurich, Switzerland }{nicolas_m_7@hotmail.com}{0000-0002-1037-0203}{} 
\author{Sebastian {Haslebacher}}{Department of Computer Science, ETH Zurich, Switzerland }{sebastian.haslebacher@inf.ethz.ch}{0000-0003-3988-3325}{}
\author{Adrian {Taubner}}{Department of Computer Science, ETH Zurich, Switzerland }{ataubner@inf.ethz.ch}{}{} 
\author{Lasse {Wulf}}{Section of Algorithms, Logic, and Graphs, Technical University of Denmark, Denmark \and The IT University, Copenhagen, Denmark}{lawu@dtu.dk}{0000-0001-7139-4092}{Supported by Carlsberg Foundation CF21-0302 ``Graph Algorithms with Geometric Applications''} 
\authorrunning{N. El Maalouly, S. Haslebacher, A. Taubner, L. Wulf}
\keywords{Exact Matching, Perfect Matching, Exact-Weight Perfect Matching, Shortest Odd Cycle, Exact Cycle Sum, l-th Smallest Solution, l-th Largest Solution, k-th Best Solution, Derandomization.}
\newcommand{\NP}{\textsf{NP}}
\renewcommand{\P}{\textsf{P}}
\newcommand{\BPP}{\textsf{BPP}}
\newcommand{\RP}{\textsf{RP}}
\newcommand{\N}{\mathbb{N}}
\newcommand{\Z}{\mathbb{Z}}
\newcommand{\bigO}{\mathcal{O}}
\DeclarePairedDelimiter\abs{\lvert}{\rvert}
\newcommand{\nequiv}{\not\equiv}
\renewcommand{\epsilon}{\ensuremath\varepsilon}
\renewcommand{\phi}{\ensuremath{\varphi}}
\newcommand{\MWPM}{\textrm{MWPM}}
\newcommand{\EWPM}{\textrm{EWPM}}
\newcommand{\EM}{\textrm{EM}}
\newcommand{\lBPM}{$\textrm{SPM}_{\ell}$}
\newcommand{\BCPM}{\textrm{BCPM}}
\newcommand{\SOC}{\textrm{SOC}}
\newcommand{\ECS}{\textrm{ECS}}
\newcommand{\PMs}{\mathcal{M}}
\newcommand{\A}{\mathcal{A}}
\begin{document}

\maketitle

\begin{abstract}
    Given an undirected weighted graph $G$ and an integer $k$, 
    Exact-Weight Perfect Matching (\EWPM) 
    is the problem of finding a perfect matching of weight exactly $k$ in $G$. 
    In this paper, we study \EWPM\ and its variants. 
    The \EWPM\ problem is famous, since in the case of unary encoded weights, 
    Mulmuley, Vazirani, and Vazirani showed almost 40 years ago that the problem can be solved in randomized polynomial time.
    However, up to this date no derandomization is known.

    Our first result is a simple deterministic algorithm for \EWPM\ that runs in time $n^{\bigO(\ell)}$, 
    where $\ell$ is the number of distinct weights that perfect matchings in $G$ can take. 
    In fact, we show how to find an $\ell$-th smallest perfect matching in any weighted graph (even if the weights are encoded in binary, in which case \EWPM\ in general is known to be \NP-complete) in time $n^{\bigO(\ell)}$ for any integer $\ell$. Similar next-to-optimal variants have also been studied recently for the shortest path problem.

    For our second result, we extend the list of problems that are known to be equivalent to  {\EWPM}. 
    We show that {\EWPM} is equivalent under a weight-preserving reduction to the Exact Cycle Sum problem (\ECS) in undirected graphs with a conservative (i.e.\ no negative cycles) weight function. 
    To the best of our knowledge, we are the first to study this problem. As a consequence, the latter problem is contained in \RP\ if the weights are encoded in unary.
    Finally, we identify a special case of {\EWPM}, called {\BCPM}, which was recently studied by El Maalouly, Steiner and Wulf. 
    We show that \BCPM\ is equivalent under a weight-preserving transformation to another problem recently studied by Schlotter and Sebő as well as Geelen and Kapadia: the  Shortest Odd Cycle problem (\SOC) in undirected graphs with conservative weights.
    Finally, our $n^{\bigO(\ell)}$ algorithm works in this setting as well, identifying a tractable special case of \SOC, \BCPM, and \ECS.
\end{abstract}

\section{Introduction}
\label{sec:introduction}

A big open question in complexity theory concerns the role of randomness in devising polynomial-time algorithms. While it is conjectured that every problem in \RP\ (and even \BPP) should also lie in \P\ (see e.g.\ Chapter~20 in~\cite{aroraComputationalComplexityModern2009} for more details), 
there exist several natural problems for which efficient randomized algorithms, but no efficient deterministic algorithms, are known. Thus, instead of attacking the conjecture \RP\ $=$ \P\ directly, it might be more reasonable to first try to devise efficient deterministic algorithms for those problems.

A famous example of a problem in \RP, that is not known to lie in \P, is the unary version of Exact-Weight Perfect Matching (\EWPM): Given an undirected weighted graph $G$ and an integer $k$, \EWPM\ is the problem of deciding whether $G$ admits a perfect matching of weight exactly $k$. Note that this problem is often called Exact Matching (\EM) if we allow only weights $0$ and $1$, and it is not hard to see (by replacing single edges by paths corresponding to their weight) that unary \EWPM\ is equivalent to \EM. 

\EWPM\ was introduced by Papadimitriou and Yannakakis~\cite{papadimitriouComplexityRestrictedSpanning1982} in 1982, and it is known to be \NP-hard in general (for a short proof, see~\cite{gurjarPlanarizingGadgetsPerfect2012}). Papadimitriou and Yannakakis also conjectured that \EWPM\ remains \NP-hard if weights are encoded in unary. However, Mulmuley, Vazirani and Vazirani~\cite{mulmuleyMatchingEasyMatrix1987} proved in 1987 that unary \EWPM\ is in \RP. This makes it very unlikely for the problem to be \NP-hard. Hence, the question became whether one could also find an efficient deterministic algorithm for unary \EWPM. 

This question has been open for almost 40 years now. In fact, despite considerable effort, no deterministic algorithm is known that runs in time that is subexponential in the number of vertices of the graph. Instead, progress has been made mostly by focusing on restricted graph classes, approximation variants of \EWPM, and finding other natural problems that are equivalent to unary \EWPM~(see Section~\ref{ssec:further_related_work} for more details). 

Our main result is a new parameterized algorithm for \EWPM\ that works on general graphs, even if weights are encoded in binary: Assume that $k_1 < k_2 <  \dots < k_\ell$ are all the distinct weights up to $k_\ell$ that perfect matchings in $G$ take. We prove that it is possible to find a perfect matching of weight $k_\ell$ in time $n^{\bigO(\ell)}$. Our algorithm is very simple and relies on our main theorem (Theorem~\ref{theorem:l-best}), which says that there must exist $2(\ell - 1)$ edges such that any minimum-weight perfect matching that includes all of these $2(\ell - 1)$ edges must have weight $k_\ell$.
The obtained bound of $2(\ell - 1)$ is tight.

\subsection{Next-to-Optimal Problems}
\label{ssec:next-to-optimal_problems}

Our main result can be equivalently stated in a different way: Consider the following problem, which we call
$\ell$-th Smallest Perfect Matching (\lBPM): Given a weighted graph $G$, an integer $k$, and a natural number $\ell$, decide whether $G$ contains an $\ell$-th smallest perfect matching of weight exactly $k$. Here, the notion of $\ell$-th smallest perfect matchings is defined recursively: $1$st-smallest perfect matchings are exactly the minimum-weight perfect matchings, and for $\ell \geq 2$, $\ell$-th smallest perfect matchings are the perfect matchings of minimum weight among all perfect matchings that are not $\ell'$-th smallest for any $\ell' < \ell$. In this language, our result is an algorithm that solves \lBPM\ in time $n^{\bigO(\ell)}$.

Similar next-to-optimal problems have been studied in the literature before.
In particular, there is a line of work on finding next-to-shortest paths in directed or undirected graphs. In these problems, we want to find a next-to-shortest $(s, t)$-path in a weighted graph $G$, i.e.\ the path should have minimum weight among all $(s, t)$-paths that are not shortest paths. This problem is \NP-hard on directed graphs~\cite{lalgudiComputingStrictlysecondShortest1997}, but admits polynomial-time algorithms on undirected graphs~\cite{kaoQuadraticAlgorithmFinding2011, krasikovFindingNextshortestPaths2004, liImprovedAlgorithmFinding2006, wuSimplerMoreEfficient2013, zhangNextShortestPathUndirected2012}. Note that the \NP-hardness result in~\cite{lalgudiComputingStrictlysecondShortest1997} heavily relies on allowing edges to have weight zero. Indeed, the so-called detour problem --- decide whether all $(s, t)$-shortest paths in a given unweighted directed graph have the same length or not --- is very similar but does not allow for zero-weight edges, and its complexity status is still unresolved~\cite{hatzelSimplerFasterAlgorithms2023}.

Instead of asking for a perfect matching of $\ell$-th smallest weight, one could instead ask for $\ell$ distinct perfect matchings while minimizing their combined weight. 
This is a conceptually different problem which is equivalent to repeatedly ($\ell$ times) finding the best solution distinct from previously obtained solutions. Such $k$-best enumeration problems have been studied for a number of combinatorial optimization problems: Murty~\cite{murty1968algorithm} first showed a polynomial-time algorithm for the $k$-best bipartite perfect matching problem. Later, the time complexity was improved by Chegireddy and Hamacher \cite{chegireddy1987algorithms}.
For more information, we refer the reader to the survey by Eppstein~\cite{eppsteinKbestEnumeration2014}.

\subsection{Related Cycle Problems}
\label{ssec:related_cycle_problems}

Our second contribution in this paper is to strengthen the connection between \EWPM\ and related cycle problems, as we will outline here. 

Consider the following problem, which we call Exact Cycle Sum (\ECS): Given an undirected graph $G$ with conservative (i.e.\ no negative cycles) edge-weights and an integer $k$, decide whether $G$ contains a set of disjoint cycles of weight exactly $k$. A directed version of \ECS\ with general weights has been studied e.g.\ by Papadimitriou and Yannakakis~\cite{papadimitriouComplexityRestrictedSpanning1982}. Concretely, they proved that the directed version of \ECS\ is equivalent to \EWPM\ on bipartite graphs. We extend this result by proving that the undirected problem \ECS\ as defined above is equivalent to \EWPM\ on general graphs. Our reductions preserve the encoding of the weights. Hence, we obtain as a result that \ECS\ is in \RP\ if weights are encoded in unary, and \NP-complete if weights are encoded in binary.

Another interesting problem on cycles in graphs is the so-called Shortest Odd Cycle problem (\SOC): Given an undirected graph $G$ with conservative edge-weights and an integer $k$, decide whether $G$ contains a cycle of odd weight at most $k$. \SOC\ was recently studied by Schlotter and Seb\H o~\cite{schlotterOddPathsCycles2025} (they use slightly different but equivalent definition, see Section~\ref{ssec:conservative_weights_and_cycle_problems} for more details), who proved that it is equivalent to finding minimum-weight odd $T$-joins. They also point out that \SOC\ is contained in \RP\ for unary weights, due to a result by Geelen and Kapadia~\cite{geelenComputingGirthCogirth2018}, and therefore conjecture that the problem should be contained in \P\ as well. Interestingly, the algorithm in~\cite{geelenComputingGirthCogirth2018} follows from a reduction to a problem on perfect matchings that can then be solved using the techniques of Mulmuley, Vazirani, and Vazirani~\cite{mulmuleyMatchingEasyMatrix1987}. In fact, this problem on perfect matchings has independently been studied by El Maalouly, Steiner, and Wulf~\cite{elmaaloulyExactMatchingCorrect2023} who called it Bounded Correct-Parity Perfect Matching (\BCPM): Given a weighted graph $G$ and an integer $k$, decide whether $G$ contains a perfect matching $M$ of weight $w(M)$ at most $k$ and with $w(M) \equiv_2 k$ (where $\equiv_2$ is used to denote mod $2$ equivalence). El Maalouly, Steiner, and Wulf use \BCPM\ on bipartite graphs as a subroutine for a parameterized algorithm for \EM\ on bipartite graphs. To this end, they prove that \BCPM\ on bipartite graphs is in \P, by reducing it to the directed version of \SOC\ (which can be solved via dynamic programming). In this paper, we extend this set of results by proving that \BCPM\ on general graphs is equivalent to \SOC\ as defined above. Unfortunately, we are not able to give deterministic polynomial-time algorithms for either problem, but any such deterministic polynomial-time algorithm for \SOC\ or \BCPM\ would immediately imply that the parameterized algorithm from~\cite{elmaaloulyExactMatchingCorrect2023} also works on general graphs. Furthermore, our results highlight that the two equivalent problems \SOC\ and \BCPM\ can be seen as a stepping stone for solving unary $\EWPM$, and that it might be reasonable to tackle them first. 

Concretely, some interesting further questions that remain unresolved are: Can we solve \SOC\ and \BCPM\ in deterministic polynomial time? The conjecture \RP\ $=$ \P\ suggests that for unary weights, one should be able to do it. Interestingly, we know very little about the case of weights encoded in binary: If the weights are encoded in binary, it is not known whether \SOC\ or \BCPM\ are in \RP\ or \NP-hard (since the randomized algorithm in  \cite{geelenComputingGirthCogirth2018} works only for unary weights). Note that the same questions were also asked by Schlotter and Seb{\H o} \cite{schlotterOddPathsCycles2025}.

While in the general case, derandomization of unary \SOC\ and \BCPM\ remains open, 
combining our algorithm for \EWPM\ with our reductions also implies that \BCPM, \SOC, and \ECS\ can be deterministically solved in time $n^{O(\ell)}$ if at most $\ell$ many different possible weights of perfect matchings 
(cycle sums, respectively) appear in the instance.

\subsection{Further Related Work on \EM}
\label{ssec:further_related_work}

As mentioned before, \EWPM\ with weights restricted to $0$ and $1$ is called the Exact Matching Problem (EM). Derandomizing the known randomized algorithm is an important open problem. Solution approaches have mainly been focused on three directions: Finding deterministic approximation algorithms, finding deterministic algorithms on restricted graph classes, and identifying other polynomial-time equivalent problems. 
\begin{itemize}
    \item In terms of approximation algorithms for \EM, Yuster~\cite{yusterAlmostExactMatchings2012} proved that one can find an almost exact matching (a matching of weight exactly $k$ that may fail to cover two vertices of the graph) in deterministic polynomial time if an exact matching exists in the graph. One could interpret this as an approximation result that relaxes the perfect matching constraint. If we relax the exactness constraint instead, recent progress includes a two-sided $2$-approximation algorithm on bipartite graphs by El~Maalouly~\cite{elmaaloulyExactMatchingAlgorithms2023}, and a one-sided $3$-approximation algorithm on bipartite graphs by Dürr, El Maalouly, and Wulf~\cite{durrApproximationAlgorithmExact2023}. 
    
    \item \EM\ has been studied on various restricted graph classes. In fact, Karzanov~\cite{karzanovMaximumMatchingGiven1987} proved already in 1987 that \EM\ can be solved in deterministic polynomial time on complete and complete bipartite graphs. This has been generalized in different ways: El Maalouly and Steiner~\cite{elmaaloulyExactMatchingGraphs2022} gave deterministic polynomial-time algorithms on graphs with bounded independence number (and bipartite graphs of bounded bipartite independence number). In the case of bipartite graphs, this was improved to an FPT-algorithm by El Maalouly, Steiner, and Wulf~\cite{elmaaloulyExactMatchingCorrect2023}. In the general case, Murakami and Yamaguchi~\cite{murakamiFPTAlgorithmExact2025} gave an FPT-algorithm that is paramatrized by the independence number and the minimum size of an odd cycle transversal. 
    
    El Maalouly, Haslebacher, and Wulf~\cite{elmaaloulyExactMatchingProblem2024} used Karzanov's original techniques to prove that \EM\ restricted to chain graphs, unit-interval graphs, and complete $r$-partite graphs is in \P. They also showed that a local search algorithm works in deterministic polynomial time on complete $r$-partite graphs, graphs of bounded neighborhood diversity, and graphs without complete bipartite $t$-holes.
    
    Finally, it is also possible to derandomize the randomized algorithm on some graph classes, e.g.\ by finding so-called Pfaffian orientations. Concretely, Vazirani~\cite{vaziraniNCAlgorithmsComputing1988} showed how to do this for $K_{3, 3}$-free graphs, and Galluccio and Loebl~\cite{galluccioTheoryPfaffianOrientations1999} obtained Pfaffian orientations for graphs of bounded 
    genus.

    \item Previously, some effort has also been made to identify other natural problems that are polynomial-time equivalent to \EM. We already mentioned some of these results in Section~\ref{ssec:related_cycle_problems}. 
    Besides the mentioned cycle problems, \EM\ is also known to be equivalent to a problem called Top-k Perfect Matching~\cite{elmaaloulyExactMatchingTopk2022}.

    \item Finally, Jia, Svensson, and Yuan~\cite{jiaExactBipartiteMatching2023} recently proved that the bipartite exact matching polytope has exponential extension complexity. This stands in contrast to the bipartite perfect matching polytope, which is known to be integral. 
\end{itemize}
We remark that many of these algorithmic results only hold for 0-1-weighted {\EWPM}, and do not even translate to unary \EWPM\ since they often involve very specific graph structures or parameters. In contrast, our parameterized algorithm works even for binary encoding of weights in \EWPM.

\subsection{Outline}
\label{ssec:outline}

We formally define \EWPM\ and related problems in Section~\ref{sec:preliminaries}, where we also recall some of the most important concepts from the literature. 
In Section~\ref{sec:l-best_matching}, we introduce the $\ell$-th Smallest Perfect Matching (\lBPM) problem, and we prove that it can be solved deterministically in time $n^{\bigO(\ell)}$. 
In Section~\ref{sec:related_cycle_problems}, we explain our reductions between cycle and matching problems. Full proofs of all statements in Section~\ref{sec:related_cycle_problems} can be found in Appendix~\ref{appendix:cycles_equiv_matchings}. 

\section{Preliminaries}
\label{sec:preliminaries}

We start by introducing notation and afterwards recall important facts from the literature. First of all, all our graphs are considered to be simple, 
i.e.\ they do not contain self-loops or parallel edges. If not specified, graphs 
are also considered to be undirected. Now, given a graph $G = (V, E)$,
we will frequently use the standard
notions of paths, cycles, and matchings, and we usually think of those 
objects as subsets of edges. In particular, 
a perfect matching $M$ of $G$ is a subset of edges $M \subseteq E$ such 
that every vertex of $G$ is covered exactly once by $M$. 
We denote by $\PMs_G$ the set of all perfect matchings of $G$.

Occasionally, we will explicitly use directed graphs. In particular,
we will write $(u, v) \in E$ (instead of $\{u, v\} \in E$) for directed edges in a 
directed graph $G = (V, E)$ with $u, v \in V$. 

In some problems, edge-weights are given for a graph $G = (V, E)$ in terms of a function 
$w : E \rightarrow \Z$ or $w : E \rightarrow \N$. We use the convention that $w(X)$
denotes the sum of all weights of edges in $X \subseteq E$. 

For two decision problems $A$ and $B$, we use $A \leq_p B$ to denote that there exists a polynomial-time many-one reduction from $A$ to $B$. We also write $A \equiv_p B$ to say that both $A \leq_p B$ and $B \leq_p A$ are true. 

\subsection{Symmetric Difference and Alternating Cycles}

An important and common tool for working with perfect matchings is 
the \emph{symmetric difference}. 
The symmetric difference of two sets $X, Y$ is defined as
$X \Delta Y \coloneqq (X \cup Y) \setminus (X \cap Y)$.
It is a well-known fact that the symmetric difference $M \Delta M'$ of two perfect matchings $M, M'$
in a graph $G = (V, E)$
always consists of a set of vertex-disjoint even cycles. 
In other words, there exist
vertex-disjoint even cycles $C_1, \dots, C_p \subseteq E$ such that $M \Delta M' = \bigsqcup_{i = 1}^p C_i$.
We say that $C_i$ is a cycle from $M \Delta M'$, and we sometimes use the letter $p$ to
refer to the number of cycles in $M \Delta M'$. Further, we recall that every cycle $C$ in $M \Delta M'$ is alternating in both $M$ and $M'$ (an even cycle $C$ of $G$ is called \emph{$M$-alternating} 
if and only if it alternates between edges in $M$ and edges not in $M$).
The set $M \Delta C$ is again a matching of the same cardinality as $M$.
We often refer to this property by saying that we \emph{switch} $M$ along $C$ to obtain the new matching $M \Delta C$. 
Naturally, the matchings $M \Delta C$ and $M$ differ exactly in $C$, i.e.\ $(M \Delta C) \Delta M = C$. 

Recall again that $M \Delta M'$
is a collection of vertex-disjoint even cycles $C_1, \dots, C_p \subseteq E$, and that each of the cycles $C_1, \dots, C_p$ is both $M$-alternating and $M'$-alternating. 
Hence, given both $M,M'$, some new perfect matchings of $G$ can be obtained by switching either $M$ or $M'$ along 
a subset of the cycles $C_1, \dots, C_p$. In particular, we can perform the switching operations
independently of each other because the cycles are vertex-disjoint.

\subsection{Minimum-Weight and Exact-Weight Perfect Matching}
Finding a minimum-weight perfect matching is a classical problem that has been studied extensively. We will define it as follows.

\begin{definition}[\MWPM]
    Given a graph $G = (V, E)$, edge-weights $w : E \rightarrow \Z$, 
    and an integer $k$, Minimum-Weight Perfect Matching (\MWPM) is the problem of deciding whether 
    $G$ admits a perfect matching of total weight at most $k$. 
\end{definition}
It is well-known that \MWPM\ is in \P\ (even if weights are encoded in binary, see e.g.\@~\cite{korteCombinatorialOptimizationTheory2018}). Moreover, finding a maximum-weight perfect matching 
is analogous to finding a minimum-weight perfect matching, and hence that can also be done in deterministic polynomial time. Since both of these problems are in \P, it then seems natural to ask whether we can also test for a perfect matching of weight exactly $k$ in polynomial time.

\begin{definition}[\EWPM]
    Given a graph $G = (V, E)$, edge-weights 
    $w : E \rightarrow \Z$, and an integer $k$,
    Exact-Weight Perfect Matching (\EWPM) is the problem
    of deciding whether $G$ admits a perfect matching of weight exactly $k$. 
\end{definition}
As mentioned in Section~\ref{sec:introduction}, the variant of \EWPM\ with weights from $\{0, 1\}$ is also known as Exact Matching (\EM), and it is contained in \RP~\cite{mulmuleyMatchingEasyMatrix1987}. Moreover, this also implies that \EWPM\ with weights encoded in unary is contained in \RP. In contrast, \EWPM\ with weights encoded in binary is \NP-complete~\cite{papadimitriouComplexityRestrictedSpanning1982, gurjarPlanarizingGadgetsPerfect2012}. Note that this constitutes a major difference to \MWPM, where both the unary and binary variant admit efficient algorithms.

The following problem \BCPM\ is at most as hard as \EWPM, and it was used in~\cite{elmaaloulyExactMatchingCorrect2023} as part of an algorithm for \EM.

\begin{definition}[\BCPM]
    Given a graph $G = (V, E)$, edge-weights 
    $w : E \rightarrow \Z$, and an integer $k$,
    Bounded Correct-Parity Perfect Matching (\BCPM) is the problem
    of deciding whether $G$ admits a perfect matching $M$ of weight $w(M)$ at most $k$ satisfying $k \equiv_2 w(M)$.
\end{definition}
The randomized algorithm for unary \EWPM\ also implies a randomized algorithm for unary \BCPM. Interestingly, the complexity of binary \BCPM\ remains open. In particular, it is not known whether binary \BCPM\ is \NP-hard or not.

\subsection{Conservative Weights and Cycle Problems}
\label{ssec:conservative_weights_and_cycle_problems}

Many problems on finding shortest paths and cycles in graphs with 
non-negative edge-weights can be solved efficiently. On the other hand, allowing 
for negative weights usually makes those problems \NP-hard as they now include 
problems such as finding Hamiltonian paths or cycles. An interesting compromise is provided 
by conservative edge-weights.

\begin{definition}[Conservative Weights]
    Edge-weights $w : E \rightarrow \Z$ of a (possibly directed) graph $G = (V, E)$ are 
    conservative if and only if there are no negative (directed) cycles in $G$ with respect to $w$. 
\end{definition}
Using established techniques (see e.g.\ Johnson's algorithm~\cite{johnsonEfficientAlgorithmsShortest1977} 
or Suurballe's algorithm~\cite{suurballeDisjointPathsNetwork1974}), 
conservative weights on directed graphs can be turned into non-negative weights 
while maintaining the total weight of each cycle. Hence, cycle problems in directed 
graphs with conservative 
edge-weights are usually not harder than their non-negative variants. 
However, in the case of undirected graphs, the same 
technique is not applicable, and it turns out that there are interesting 
cycle problems on undirected graphs with conservative edge-weights that relate to
\EWPM\ and \BCPM. Concretely, we will work with the two problems Exact Cycle Sum (\ECS) and Shortest Odd Cycle (\SOC), defined on graphs with conservative weights.

\begin{definition}[ECS]
\label{def:ecs}
    Given a graph $G = (V, E)$ with conservative 
    edge-weights $w : E \rightarrow \Z$ 
    and an integer $k$, Exact Cycle Sum (ECS) is the problem of deciding whether 
    there exists a set of vertex-disjoint cycles of total weight exactly $k$ in $G$.    
\end{definition}

\begin{definition}[SOC]
\label{def:soc}
    Given a graph $G = (V, E)$ with conservative 
    edge-weights $w : E \rightarrow \Z$ 
    and an integer $k$, Shortest Odd Cycle (SOC) is the problem of deciding whether 
    there exists a cycle of odd weight at most $k$ in $G$.
\end{definition}
Note that the definition of \SOC\ by Schlotter and Seb{\H o}~\cite{schlotterOddPathsCycles2025} as well as Geelen and Kapadia~\cite{geelenComputingGirthCogirth2018} slightly differs from ours: They require the cycle to have odd length (i.e.\ an odd number of edges) instead of odd weight. However, it is not hard to see that the two versions are polynomial-time equivalent. Indeed, by splitting even-weight edges, we can ensure to get a graph with only odd-weight edges, yielding a reduction from our odd-weight version to the odd-length version. To reduce the odd-length version to the odd-weight version, it suffices to manipulate the weights, replacing the weight $w(e)$ of each edge $e$ with $2|V| w(e) + 1$.

As mentioned in Section~\ref{ssec:related_cycle_problems}, El Maalouly et al.\@~\cite{elmaaloulyExactMatchingCorrect2023} gave a polynomial-time 
algorithm for \BCPM\ on bipartite graphs by reducing it to a directed version of \SOC. An analogous connection between \EWPM\ on bipartite graphs and the directed version of \ECS\ has been observed by Papadimitriou and 
Yannakakis~\cite{papadimitriouComplexityRestrictedSpanning1982} already in 1982. In Section~\ref{sec:related_cycle_problems}, we will 
extend these connections between cycle and matching problems by proving equivalence of \ECS\ (undirected) and \EWPM\ (on general graphs), and \SOC\ (undirected) and \BCPM\ (on general graphs), respectively.

\section{\texorpdfstring{$\ell$}{l}-th Smallest Perfect Matchings}
\label{sec:l-best_matching}
 
As discussed in Section~\ref{sec:introduction}, the problem of finding a deterministic polynomial-time algorithm for unary \EWPM\ has been attacked from various directions in recent years. We attack it from yet another direction that involves so-called $\ell$-th smallest perfect matchings.

\begin{definition}
    Let $G = (V, E)$ be a graph with edge-weights $w : E \rightarrow \Z$. We define $\ell$-th smallest perfect matchings for all $\ell \in \N$ recursively. The $1$st-smallest perfect matchings of $G$ are exactly the minimum-weight perfect matchings. Moreover, for fixed $\ell \geq 2$, a perfect matching $M$ is $\ell$-th smallest if and only if it has minimum weight among all perfect matchings that are not $\ell'$-th smallest for any $\ell' < \ell$.
\end{definition}
\begin{definition}[\lBPM]
    Given a graph $G = (V, E)$, edge-weights 
    $w : E \rightarrow \Z$, an integer $k$, and a natural number $\ell$,
    $\ell$-th Smallest Perfect Matching (\lBPM) is the problem
    of deciding whether $G$ admits an $\ell$-th smallest perfect matching of weight exactly $k$. 
\end{definition}
The goal of this section is to give a deterministic algorithm that can decide \lBPM\ in time $n^{\bigO(\ell)}$, where $n$ is the size of the instance. Naturally, this has interesting consequences for \EWPM: Assume that we are given an instance of \EWPM\ where $k$ is such that any perfect matching $M$ with $w(M) = k$ is an $\ell'$-th smallest perfect matching. Then we can verify existence of such a matching in time $n^{\bigO(\ell')}$ by simply calling the \lBPM-algorithm repeatedly for $\ell = 1, \dots, \ell'$. 

To the best of our knowledge, we are the first to observe a result of this kind. Our algorithm is very simple and follows a natural approach: For a given $\ell$ and for every set $F \subseteq E$ of size at most $2 (\ell - 1)$, compute a minimum-weight perfect matching that includes all of $F$. We prove in Theorem~\ref{theorem:l-best} that there is always a $1$st-smallest, a $2$nd-smallest, \dots, an $\ell$-th smallest perfect matching among the set of perfect matchings that were computed. The proof of Theorem~\ref{theorem:l-best} is less straight-forward. It relies on the complementary slackness theorem (compare e.g.\ \cite{korteCombinatorialOptimizationTheory2018}) and the perfect matching polytope.

\begin{theorem}[$\ell$-th Smallest PM in General Graphs]
\label{theorem:l-best}
    Let $G = (V, E)$ be a graph with edge-weights $w : E \rightarrow \Z$. Let $\ell \in \N$ be arbitrary. Assume that $G$ contains an $\ell$-th smallest perfect matching $M$. Then there exists a subset of edges $F \subseteq E$ of size at most $2(\ell - 1)$ such that 
    \begin{displaymath}
        w(M) =  \min\limits_{M' \in \PMs_G \, : \, F \subseteq M'} w(M').
    \end{displaymath}
\end{theorem}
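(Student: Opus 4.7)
The plan is to combine a structural lemma about symmetric differences of perfect matchings with LP duality over Edmonds' perfect matching polytope, following the hint that the proof relies on complementary slackness.

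First, I would pick a minimum-weight perfect matching $M^*$ that minimizes the number of alternating cycles in $M \Delta M^*$, and argue that this number $p$ satisfies $p \leq \ell - 1$. The key observation is that each cycle $C_i$ in $M \Delta M^*$ must have strictly positive weight gain $\delta_i := w(C_i \cap M) - w(C_i \cap M^*) > 0$: a negative $\delta_i$ would contradict optimality of $M^*$ (one could switch $M^*$ along $C_i$ to obtain a lighter PM), while $\delta_i = 0$ would let us replace $M^*$ by the equally optimal PM $M^* \Delta C_i$, which has one fewer cycle in its symmetric difference with $M$ --- contradicting minimality of $p$. Given this, the chain $M^*, M^* \Delta C_1, M^* \Delta (C_1 \sqcup C_2), \ldots, M^* \Delta \bigsqcup_i C_i = M$ yields $p + 1$ perfect matchings with strictly increasing weights, all at most $w(M)$. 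Since $M$ is the $\ell$-th smallest, the number of distinct weight classes up to $w(M)$ is exactly $\ell$, which forces $p + 1 \leq \ell$.

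Next, I would construct $F \subseteq M$ by selecting $2$ edges from $C_i \cap M$ per cycle, giving $|F| \leq 2p \leq 2(\ell - 1)$. To certify that $M$ attains the minimum weight among PMs containing $F$, I would exhibit a feasible dual solution of value $w(M)$ for the LP $\min w^\top x$ over $\{x \in P : x_e = 1 \text{ for } e \in F\}$, where $P$ is the perfect matching polytope. Starting from an optimal dual $(y^*, z^*)$ of the unrestricted LP that certifies $M^*$ via complementary slackness (all $M^*$-edges tight), I would modify the vertex potentials on $V(C_i)$ by an ``alternating shift'' along each cycle: this preserves tightness of the $M^*$-edges inside $C_i$ while tightening the $M$-edges of $C_i$; any residual slack concentrates on the two selected edges of $F \cap C_i$, where it is absorbed by the Lagrange multipliers $\alpha_e$ arising from the added constraints $x_e = 1$. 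By complementary slackness, the modified dual then certifies that $M$ is optimal in the restricted LP, so $w(M) = \min_{M' \supseteq F} w(M')$.

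The hard part will be verifying that this dual construction is genuinely feasible. The alternating shift on $V(C_i)$ can decrease the reduced cost of edges from $V(C_i)$ to $V \setminus V(C_i)$, so the two per-cycle edges of $F$ must be chosen with enough ``slack room'' to guarantee non-negativity of all reduced costs. Moreover, the complementary slackness condition $z^*_S > 0 \Rightarrow |M \cap E[S]| = (|S|-1)/2$ need not hold for $M$ at the initial dual $(y^*, z^*)$, so I would need to invoke the Cunningham--Marsh theorem to choose $(y^*, z^*)$ with $z^*$ supported on a laminar family of odd sets, and then argue that after the shift this laminar structure remains compatible with $M$ as well. Showing that a budget of two edges per cycle always suffices to absorb both the vertex-potential discrepancies and the odd-set slackness is the main technical obstacle.
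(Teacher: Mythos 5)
Your first step is correct and is a genuinely nice observation that does not appear in the paper: choosing a minimum-weight perfect matching $M^*$ minimizing the number $p$ of cycles in $M \Delta M^*$, showing each cycle has strictly positive gain $\delta_i > 0$, and reading off $p + 1 \leq \ell$ from the chain of $p+1$ distinct weights below $w(M)$. That cleanly explains where a budget of $2(\ell-1)$ edges \emph{could} come from. The gap is entirely in your second step, and it is the whole theorem: you assert that some choice of two edges from each $C_i \cap M$ certifies $w(M) = \min_{M' \supseteq F} w(M')$, but you neither specify the choice nor establish feasibility of the certifying dual, and you say yourself that both are unresolved. An arbitrary choice of two edges per cycle demonstrably fails (already for $\ell = 2$: if the obstruction is an odd set $A$ with $z_A > 0$ and $|M \cap \delta(A)| \geq 3$, the two fixed edges must both lie in $\delta(A)$, otherwise a minimum-weight matching containing them may exist). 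Worse, for $\ell \geq 3$ a single modified dual built from the original optimal $(y^*, z^*)$ by per-cycle shifts is the wrong object: the reason $M$ is heavier than the \emph{second}-smallest weight is governed by an optimal dual of the \emph{restricted} problem (perfect matchings forced to contain the first pair of edges), whose tight edges and positive odd sets can be unrelated to those of $(y^*, z^*)$ and need not align with the cycles of $M \Delta M^*$ at all. Nothing in your sketch addresses this, and the Cunningham--Marsh laminarity remark does not bridge it.

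The paper's proof sidesteps every one of these difficulties by recursing instead of building one global certificate. The base case ($\ell = 2$) uses complementary slackness only negatively: since $M$ is not optimal, either some $e \in M$ is non-tight (then no optimal matching contains $e$), or some odd set $A$ with $z_A > 0$ meets $M$ in at least three edges (then no optimal matching contains two chosen edges $e, f \in M \cap \delta(A)$); either way $\min_{M' \supseteq \{e,f\}} w(M')$ exceeds the optimum and, as $M$ is second-smallest and contains $e, f$, equals $w(M)$. For general $\ell$ one fixes such a pair $e, f$, deletes their endpoints, observes that $M \setminus \{e,f\}$ is an $\ell'$-th smallest perfect matching of the reduced graph for some $\ell' < \ell$ (because the reduced graph has lost the smallest weight class), and applies induction --- crucially recomputing the dual on the smaller graph at each level rather than patching the original one. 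If you want to rescue your approach, you should replace the monolithic dual modification by this fix-two-delete-and-recurse scheme; your step 1 can then be dropped, since the bound $2(\ell-1)$ falls out of the recursion directly.
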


It turns out that one can slightly improve the above theorem in the case of bipartite graphs. 

\begin{theorem}[$\ell$-th Smallest PM in Bipartite Graphs]
\label{theorem:l-best-bipartite}
    The $\ell$-th smallest perfect matching in a weighted bipartite graph can be obtained the same way as in Theorem~\ref{theorem:l-best}, except that $F$ ranges only over edge sets of size at most $\ell - 1$.
\end{theorem}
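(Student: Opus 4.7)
The plan is to adapt the proof of Theorem~\ref{theorem:l-best} to the bipartite setting and track where the factor of two disappears. Recall that the proof of Theorem~\ref{theorem:l-best} is driven by complementary slackness for the minimum-weight perfect matching LP, whose feasible region is Edmonds' perfect matching polytope. For a general graph this polytope requires non-negativity, the degree equalities $\sum_{e \ni v} x_e = 1$, and the odd-set inequalities $\sum_{e \subseteq U} x_e \leq (|U| - 1)/2$ for every odd $U \subseteq V$. For a bipartite graph, however, the odd-set inequalities are redundant, and the polytope is already described by non-negativity together with the degree equalities. Dually, the bipartite LP has only vertex-indexed variables $y_v$, and none indexed by odd subsets.

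Following the structure of Theorem~\ref{theorem:l-best}, one assembles $F$ by iterating over the $\ell - 1$ weight classes of $\PMs_G$ that lie strictly below $w(M)$ and, for each such class, adding a few edges whose inclusion in the matching rules that class out. In the general-graph argument, complementary slackness charges up to two edges per lower weight class to $F$: roughly, one edge certifies tightness of a vertex-type dual constraint, and a second edge certifies tightness of an odd-set-type dual constraint. Summing over the $\ell - 1$ lower weight classes gives the bound $|F| \leq 2(\ell - 1)$. In the bipartite case, the second kind of dual constraint (and the odd-set dual variable supporting it) is absent, so only the vertex-type witness remains, costing a single edge per weight class. Summing then yields $|F| \leq \ell - 1$, matching the claim.

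The main obstacle will be verifying that the weight-class-by-weight-class bookkeeping actually composes: once all $\ell - 1$ chosen edges are simultaneously forced into the matching, we need $w(M) = \min\{w(M') : M' \in \PMs_G,\ F \subseteq M'\}$, not merely that each single weight class is blocked in isolation. This goes through cleanly in the bipartite case because contracting (or forcing) a single edge preserves bipartiteness, so at every stage of the iteration the reduced instance still has a bipartite perfect matching polytope without odd-set inequalities, and K\"onig--Egerv\'ary duality applies verbatim. Hence the argument of Theorem~\ref{theorem:l-best} can be rerun on the reduced instance without ever paying the extra edge per weight class that the odd-set constraints would force in the general case, and the bound $\ell - 1$ follows.
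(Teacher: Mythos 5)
Your proposal is correct and takes essentially the same route as the paper's (omitted) proof, which likewise just swaps the general perfect matching polytope for the bipartite one: the dual then has no odd-set variables, so complementary slackness yields a single non-tight edge of $M$ to force at each step, and the induction (deleting that edge's endpoints, which preserves bipartiteness) gives $|F| \leq \ell - 1$. One small correction to your account of the general case: the two edges per step in Lemma~\ref{lemma:fixing_two_edges} arise either as one non-tight edge plus an arbitrary companion, or as two edges of $M$ crossing a common odd set $A$ with $z_A > 0$, not as one ``vertex-type'' and one ``odd-set-type'' witness --- but your conclusion that the odd-set constraints are the sole source of the factor $2$ is right.
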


In this paper, we omit the proof of Theorem~\ref{theorem:l-best-bipartite}, since its proof is identical to that of Theorem~\ref{theorem:l-best}, except that the general matching polytope is swapped with the bipartite matching polytope.

One might ask whether the number of edges one needs to "fix" in these two theorems is tight. 
Indeed, this is the case, i.e.\ in order to find the $\ell$-th smallest perfect matching, 
one needs to fix $\ell - 1$ edges in a bipartite graph and $2(\ell - 1)$ edges in a general graph. 
To see this, consider Figure~\ref{figure:tight_bounds}, where dashed edges have a weight of $1$ while solid edges have a weight of $0$. 
Consider first the bipartite four-cycle on the left. It has exactly two perfect matchings, one of weight $0$ and one of weight $2$. 
Fixing one edge is necessary and sufficient to find the $2$nd-smallest perfect matching. 
Taking multiple disjoint copies of this four-cycle generalizes this to show that it is sometimes necessary to fix $\ell - 1$ edges to find the $\ell$-th smallest perfect matching in a bipartite graph. 
Analogously, the graph on the right has perfect matchings of weight $1$ and $3$, but fixing a single edge does not suffice to find the perfect matching of weight $3$. 
In other words, we need to fix $2$ edges to find the second-smallest perfect matching here. 
Again, generalizing this argument by taking multiple copies of this graph proves that the bound $2(\ell - 1)$ is best possible in general graphs. 
\begin{figure}[htb]
    \centering
    \includegraphics[width=0.6\textwidth]{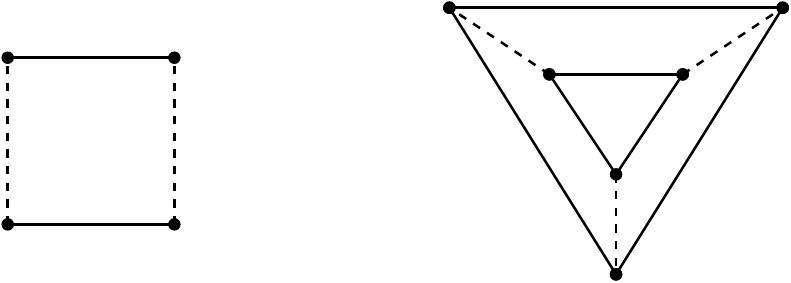} 
    \caption{Example that Theorems~\ref{theorem:l-best} and \ref{theorem:l-best-bipartite} are tight in terms of the number of fixed edges. Dashed edges have a weight of $1$ and solid edges have a weight of 0.}
    \label{figure:tight_bounds}
\end{figure}

\subsection{Proof of Theorem~\ref{theorem:l-best}}
\label{subsection:general_graphs}

The goal of this section is to prove Theorem~\ref{theorem:l-best}. We start by recalling some useful facts about the perfect matching polytope from Chapter~11 of the standard textbook by Korte and Vygen~\cite{korteCombinatorialOptimizationTheory2018}.
For this, it will be convenient to use the notation $\delta(v)$ to denote the set of edges incident to a vertex $v \in V$. Similarly, for all $A \subseteq V$, we use $\delta(A)$ to denote the set of edges with one endpoint in $A$ and the other in $V \setminus A$. Moreover, we denote by $\A$ the set of all odd-cardinality sets $A \subseteq V$, i.e.\ $\A \coloneqq \{ A \subseteq V : |A| \equiv_2 1\}$.

For the remainder of this section, let $G = (V, E)$ be a graph with edge-weights $w : E \rightarrow \Z$. 
The integer linear program 
\begin{align*}
    &\min \sum_{e \in E} x_e w(e) \\
    \text{ s.t.\ } &\sum_{e \in \delta(v)} x_e = 1 &\forall v \in V\\
    \text{ and } &x_e \in \{0, 1\} &\forall e \in E
\end{align*}
with variables $\{x_e \mid e \in E \}$ captures the problem of finding a minimum-weight perfect matching in $G$. Now consider the following LP relaxation
\begin{align*}
    &\min \sum_{e \in E} x_e w(e) \\
    \text{ s.t.\ } &\sum_{e \in \delta(v)} x_e = 1 &\forall v \in V\\
    &\sum_{e \in \delta(A)} x_e \geq 1 &\forall A \in \A \text{ with } |A| > 1 \\
     &x_e \geq 0 &\forall e \in E
\end{align*}
where additional odd-set constraints were added. By the famous result of Edmonds~\cite{edmonds1965maximum}, adding these additional odd-set constraints is necessary and sufficient to make the LP integral.
In other words, the basic feasible solutions to the LP are exactly the perfect matchings of $G$. The dual LP is now given by \begin{align*}
    &\max \sum_{A \in \A} z_A \\
    \text{ s.t.\ } &\sum_{A \in \A \, : \,  e \in \delta(A)} z_A \leq w(e) &\forall e \in E\\
     & z_A \geq 0 &\forall A \in \A \text{ with } |A| > 1\\
\end{align*}
where we introduced a variable $z_A$ for all $A \in \A$. By strong duality, the optimal value of the primal and the dual coincide.

\begin{definition}[Residual Graph]
    Let $G = (V, E)$ be a graph with edge-weights $w : E \rightarrow \Z$. 
    Let $z$ be an optimal solution to the dual LP associated with $G$.
    The graph $G_z \coloneqq (V, E_z)$ with 
    \[
        e \in E_z \iff w(e) = \sum_{A \in \A \, : \, e \in \delta(A)} z_A
    \]
    is called residual graph with respect to $z$.
\end{definition}
Unsurprisingly, it turns out that all minimum-weight perfect matchings are contained in the residual graph with respect to a given optimal solution $z$.
However, there might exist perfect matchings in the residual graph that are not minimum-weight perfect matchings of $G$. Hence, we need an extra condition to exactly characterize which perfect matchings in the residual graph are minimum-weight in $G$. 
Eppstein \cite{eppstein1995representing} shows such a result for bipartite graphs.
We present here the natural generalization to all graphs. This essentially follows from the complementary slackness theorem \cite{korteCombinatorialOptimizationTheory2018}.

\begin{lemma}
\label{lemma:general_residual_graph}
    Let $G = (V, E)$ be a graph with edge-weights $w : E \rightarrow \Z$. 
    Let $z$ be an optimal solution to the dual LP associated with $G$, and let $G_z = (V, E_z)$ be its residual graph.
    A perfect matching $M$ of $G$ has minimum weight if and only if $M \subseteq E_z$ and $|M \cap \delta(A) | = 1$ holds for all $A \in \A$ with $z_A > 0$.
\end{lemma}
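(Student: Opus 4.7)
The plan is to recognize this as a direct application of the complementary slackness theorem to the primal-dual pair introduced just before the lemma. For a perfect matching $M$ of $G$, let $x^M \in \{0,1\}^E$ denote its indicator vector. Since the basic feasible solutions to the LP relaxation are exactly the perfect matchings of $G$ (by Edmonds' result), $x^M$ is always primal feasible, and $M$ is minimum-weight if and only if $x^M$ is primal optimal.

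First I would spell out the two nontrivial complementary slackness conditions for the pair $(x^M, z)$. The degree constraints $\sum_{e \in \delta(v)} x_e = 1$ are equalities, so their dual variables $z_{\{v\}}$ are free and the corresponding slackness conditions are automatically satisfied (both for the primal, since $M$ is perfect, and for the dual, since the primal constraint is always tight). The remaining conditions are:
\begin{itemize}
    \item[(i)] for every $e \in E$ with $x^M_e > 0$ (equivalently $e \in M$), the dual constraint is tight, i.e.\ $w(e) = \sum_{A \in \A \,:\, e \in \delta(A)} z_A$;
    \item[(ii)] for every $A \in \A$ with $|A| > 1$ and $z_A > 0$, the primal constraint is tight, i.e.\ $|M \cap \delta(A)| = \sum_{e \in \delta(A)} x^M_e = 1$.
\end{itemize}
By the definition of the residual graph, (i) is equivalent to $M \subseteq E_z$.

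For the forward direction, assume $M$ is a minimum-weight perfect matching. Then $x^M$ is primal optimal and $z$ is dual optimal, so the complementary slackness theorem implies that (i) and (ii) both hold. Condition (ii) extends to all $A \in \A$ with $z_A > 0$ because, for a singleton $A = \{v\}$, $|M \cap \delta(v)| = 1$ holds trivially as $M$ is perfect. For the backward direction, assume $M \subseteq E_z$ and $|M \cap \delta(A)| = 1$ for all $A \in \A$ with $z_A > 0$. Then (i) and (ii) are satisfied by $(x^M, z)$, and since $x^M$ is primal feasible while $z$ is dual feasible, the complementary slackness theorem implies that $x^M$ is primal optimal, i.e.\ $M$ has minimum weight.

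The hard part will mostly be bookkeeping: carefully distinguishing equality versus inequality constraints in the LP, making sure the singleton case $A = \{v\}$ is handled cleanly so that the lemma statement matches the set of conditions delivered by complementary slackness, and citing the appropriate version of the complementary slackness theorem (for instance, from Korte and Vygen~\cite{korteCombinatorialOptimizationTheory2018}). Beyond these points, the argument is purely mechanical and requires no new combinatorial idea.
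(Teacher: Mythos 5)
Your proof is correct and takes essentially the same approach as the paper, which itself notes that the lemma ``essentially follows from the complementary slackness theorem'': the paper merely inlines the complementary slackness argument as an explicit strong-duality computation (showing $w(M) - \sum_{A \in \A} z_A > 0$ when a condition fails and $= 0$ when both hold), whereas you invoke the theorem as a black box and handle the singleton sets separately. Both routes rely on the same facts (integrality of the Edmonds LP, feasibility of $x^M$, optimality of $z$), so there is no substantive difference.
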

\begin{proof}
For the sake of clarity, we prove both directions individually.
\begin{itemize}
    \item[($\Rightarrow$)] We provide an indirect proof. Assume first that there exists $f \in M$ such that 
    \[
        \sum_{A \in \A \, : \,  f \in \delta(A)} z_A < w(f)
    \]
    holds and hence $f \notin E_z$. Observe that, by feasibility of $z$, we have 
    \[
        \sum_{A \in \A \, : \,  e \in \delta(A)} z_A \leq w(e)
    \]
    for all $e \in E$. Moreover, we have $z_A \geq 0$ for all $A \in \A$ with $|A| > 1$. Thus, we get 
    \[       
        w(M) - \sum_{A \in \A} z_A \geq \sum_{e \in M} \left( w(e) - \sum_{A \in \A \, : \, e \in \delta(A)} z_A \right) > 0
    \]
    since every set $A \in \A$ with $|A| = 1$ is associated with exactly one edge in $M$, and every other set $A \in \A$ (with $|A| > 1$) satisfies $z_A \geq 0$. In other words, we have 
    \[
        w(M) > \sum_{A \in \A} z_A
    \]
    which, by strong duality, implies that $M$ does not have minimum weight. Thus, assume now that $M \subseteq E_z$, but there exists $A \in \A$ such that $z_A > 0$ and $|M \cap A| > 1$. Again, we calculate 
    \[       
        w(M) - \sum_{A \in \A} z_A > \sum_{e \in M} \left( w(e) - \sum_{A \in \A \, : \, e \in \delta(A)} z_A \right) = 0
    \]
    where the strict inequality now comes from the fact that we needed to duplicate the variable $z_A$ in the first step. Again, by strong duality, we conclude that $M$ does not have minimum weight.

    \item[($\Leftarrow$)] Assume now that we have $M \subseteq E_z$ and $|M \cap A| = 1$ for all $A \in \A$ with $z_A > 0$. In this case, we get
    \[
        w(M) - \sum_{A \in \A} z_A = \sum_{e \in M} \left( w(e) - \sum_{A \in \A \, : \, e \in \delta(A)} z_A \right) = 0
    \]
    as every set $A \in \A$ with $z_A > 0$ is associated with exactly one edge in $M$. By strong duality, we can use the fact that $z$ is optimal to conclude that $M$ has minimum weight.
\end{itemize}
\end{proof}
Using Lemma~\ref{lemma:general_residual_graph}, we can prove the following lemma, which captures the special case $\ell = 2$ of Theorem~\ref{theorem:l-best}.

\begin{lemma}
\label{lemma:fixing_two_edges}
    Let $G = (V, E)$ be a graph with edge-weights $w : E \rightarrow \Z$. Assume that $G$ contains a $2$nd-smallest perfect matching $M$. Then there exist edges $e, f \in E$ such that 
    \begin{displaymath}
        w(M) =  \min\limits_{M' \in \PMs_G \, : \, \{e, f\} \subseteq M'} w(M').
    \end{displaymath}
\end{lemma}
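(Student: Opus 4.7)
The plan is to apply Lemma~\ref{lemma:general_residual_graph} directly to the $2$nd-smallest perfect matching $M$. Fix any optimal dual solution $z$ together with its residual graph $G_z$. Because $M$ is $2$nd-smallest, it is in particular not minimum-weight, so one of the two conditions characterizing min-weight perfect matchings in Lemma~\ref{lemma:general_residual_graph} must fail for $M$. This gives a natural case distinction that I would use to pick the two edges $e,f$.

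In the first case, there exists an edge $e \in M$ with $e \notin E_z$. Then any perfect matching $N$ containing $e$ already violates the residual-graph condition and therefore satisfies $w(N) > w_{\min}$, where $w_{\min}$ is the minimum weight of a perfect matching in $G$. Since $M$ is $2$nd-smallest, there is no perfect matching with weight strictly between $w_{\min}$ and $w(M)$, so $w(N) \geq w(M)$. Choosing $f$ to be any other edge of $M$ (possible since a graph admitting a $2$nd-smallest perfect matching has $|M| \geq 2$) gives that every perfect matching containing $\{e,f\}$ has weight at least $w(M)$, and $M$ itself achieves the bound.

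In the second case, $M \subseteq E_z$ but there is some $A \in \A$ with $z_A > 0$ and $|M \cap \delta(A)| > 1$. Since $|A|$ is odd, the intersection $|M \cap \delta(A)|$ is odd as well, so it is at least $3$. I would then take $e, f$ to be any two distinct edges of $M \cap \delta(A)$. For any perfect matching $N$ containing both $e$ and $f$, we have $|N \cap \delta(A)| \geq 2$, so $N$ again violates the condition of Lemma~\ref{lemma:general_residual_graph} and therefore is not minimum-weight. By the same $2$nd-smallest argument as before, $w(N) \geq w(M)$, and equality is achieved by $M$.

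The main obstacle is setting up the right case distinction: one might try instead to argue via the symmetric difference $M \Delta M^*$ with a minimum-weight matching $M^*$ and extract two edges from the unique ``heavy'' alternating cycle forced by the $2$nd-smallest hypothesis. That route also works, but it requires showing that the chosen heavy-cycle edges cannot simultaneously lie in any minimum-weight matching, which essentially re-derives the residual-graph information already packaged in Lemma~\ref{lemma:general_residual_graph}. Routing the proof through Lemma~\ref{lemma:general_residual_graph} avoids this duplication and makes the bound ``$2$ edges'' transparent: one edge is needed to forbid residual-graph membership, and two edges are needed to overload a tight odd cut.
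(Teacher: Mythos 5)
Your proof is correct and follows essentially the same route as the paper: fix an optimal dual solution, invoke Lemma~\ref{lemma:general_residual_graph} to split into the two failure cases (an edge of $M$ outside $E_z$, or two edges of $M$ crossing a tight odd cut with $z_A>0$), and conclude via the $2$nd-smallest hypothesis that the constrained minimum equals $w(M)$. The extra observations you make (that $|M\cap\delta(A)|$ is odd and hence at least $3$, and that $|M|\geq 2$) are correct but not needed beyond what the paper already uses.
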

\begin{proof}
    Let $z$ be a fixed optimal dual solution. Since $M$ is not a minimum-weight perfect matching, by Lemma~\ref{lemma:general_residual_graph}, there are two cases: Either there exists an edge $e \in M$ with $e \not\in E_z$ (in this case, take as $f$ any edge from $M$ and we are done), or there exists some odd set $A \in \mathcal{A}$ with $z_A > 0$, distinct edges $e, f \in M$, and both $e,f \in \delta(A)$. In the latter case, by the same lemma, no minimum-weight perfect matching contains both $e$ and $f$. Hence, we get
    \begin{displaymath}
        \text{OPT} < \min\limits_{M' \in \PMs_G \, : \, \{e, f\} \subseteq M'} w(M') = w(M)
    \end{displaymath}
    where OPT denotes the weight of a minimum-weight perfect matching in $G$.
\end{proof}
It remains to use Lemma~\ref{lemma:fixing_two_edges} repeatedly to obtain Theorem~\ref{theorem:l-best}.

\begin{proof}[Proof of Theorem~\ref{theorem:l-best}]
    We proceed by induction over $\ell$. The case $\ell = 1$ is trivial and Lemma~\ref{lemma:fixing_two_edges} already proves the case $\ell = 2$. Thus, let now $\ell \geq 3$ be arbitrary, and assume as induction hypothesis that the statement holds for all $\ell' < \ell$. Recall that $M$ is an $\ell$-th smallest perfect matching. We want to prove that there is a subset $F \subseteq E$ of edges of size at most $2(\ell - 1)$ such that 
    \begin{displaymath}
        w(M) =  \min\limits_{M' \in \PMs_G \, : \, F \subseteq M'} w(M').
    \end{displaymath}
    Let $M_{\min} $ be an arbitrary minimum-weight perfect matching of $G$. From the argument in Lemma~\ref{lemma:fixing_two_edges}, we know that there exist edges $e, f \in M$ such that 
    \begin{displaymath}
        w(M_{\min}) < \min\limits_{M' \in \PMs_G \, : \, \{e, f\} \subseteq M'} w(M') \leq w(M).
    \end{displaymath}
    Consider now the graph $G'$ obtained from $G$ by removing $e$ and $f$, their incident vertices, and all their incident edges. 
    Observe that $M \setminus \{e, f\}$ is an $\ell'$-th smallest perfect matching in $G'$ for some $\ell' < \ell$. By the induction hypothesis, this means that there is a set $F' \subseteq E \setminus \{e, f\}$ of size at most $2(\ell' - 1)$ such that 
    \begin{displaymath}
        w(M \setminus \{e, f\}) =  \min\limits_{M' \in \PMs_{G'} \, : \, F' \subseteq M'} w(M').
    \end{displaymath}
    Define $F \coloneqq F' \cup \{e, f\}$ and observe that 
    \begin{displaymath}
        w(M) = w(\{e, f\}) + \min\limits_{M' \in \PMs_{G'} \, : \, F' \subseteq M'} w(M') = \min\limits_{M' \in \PMs_{G} \, : \, F \subseteq M'} w(M')
    \end{displaymath}
    and that the set $F$ has size at most $2(\ell - 1)$.
\end{proof}

\section{Related Cycle Problems}
\label{sec:related_cycle_problems}

In this section, we present the ideas behind Theorem~\ref{theorem:cycles_equiv_matchings} below. The full proof of Theorem~\ref{theorem:cycles_equiv_matchings} and the involved lemmata can be found in Appendix~\ref{appendix:cycles_equiv_matchings}. 

\begin{restatable}[\EWPM\ $\equiv_p$ \ECS\ \& \BCPM\ $\equiv_p$ \SOC]{theorem}{theoremcyclesequivmatchings}
\label{theorem:cycles_equiv_matchings}
    \EWPM\ is polynomial-time equivalent to \ECS. 
    \BCPM\ is polynomial-time equivalent to \SOC. 
\end{restatable}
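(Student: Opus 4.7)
The plan is to establish both equivalences by constructing four weight-preserving polynomial-time reductions: $\EWPM \leq_p \ECS$ together with $\ECS \leq_p \EWPM$, and $\BCPM \leq_p \SOC$ together with $\SOC \leq_p \BCPM$. Both equivalences share a common framework based on the symmetric-difference correspondence between perfect matchings and vertex-disjoint alternating cycles.

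For the matching-to-cycle reductions, I would first compute a minimum-weight perfect matching $M^*$ of the input graph $G$ in polynomial time using Edmonds' algorithm. The key observation is that, for any other perfect matching $M$, the symmetric difference $M \Delta M^*$ decomposes into vertex-disjoint $M^*$-alternating even cycles $C_1, \dots, C_p$, and
\[
w(M) - w(M^*) \;=\; \sum_{i=1}^p \bigl( w(C_i \setminus M^*) - w(C_i \cap M^*) \bigr),
\]
with each summand non-negative by minimality of $M^*$. The goal is to construct an auxiliary graph $H$ with weights $w'$ such that vertex-disjoint cycles in $H$ correspond bijectively to vertex-disjoint $M^*$-alternating cycle packings in $G$, with $w'$-weight matching the switching contribution above; the $\ECS$ target then becomes $k - w(M^*)$, and $w'$ is automatically conservative. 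Concretely, for each matching edge $m = uv \in M^*$ I would install a small gadget with two ports, one per endpoint of $m$, such that non-matching edges of $G$ incident to $u$ attach to one port while those incident to $v$ attach to the other, and any cycle passing through the gadget is forced to traverse from one port to the other. Setting $w'(e) = w(e) - \tfrac{1}{2}\bigl(w(m(x)) + w(m(y))\bigr)$ on each non-matching edge $e = xy$ of $G$ (where $m(\cdot)$ denotes the matching edge covering a vertex), together with zero weight on the gadget-internal edges, yields the identity $w'(C) = w(C \setminus M^*) - w(C \cap M^*)$ for every alternating cycle $C$ and hence the desired correspondence. For $\BCPM \leq_p \SOC$, the same construction applies: one instead looks for a single switching cycle of bounded weight and of the parity dictated by $w(M) \equiv_2 k$, which translates directly into an odd-weight cycle of weight at most $k - w(M^*)$ in $H$.

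For the reverse direction, the plan is to build $H$ so that its perfect matchings correspond to pairs consisting of a cycle packing $F$ in $G$ together with a default matching of the vertices outside $V(F)$. I would use a standard ``vertex gadget'' construction: replace each vertex $v \in V(G)$ by a gadget with one port per incident edge, whose internal perfect matchings encode either $v \notin V(F)$ (all ports matched internally) or the selection of exactly two ports corresponding to the incoming and outgoing cycle edges at $v$. Weights are arranged so that the weight of any perfect matching in $H$ equals $w(F)$ plus a fixed constant, giving a direct translation of the $\ECS$ target to an $\EWPM$ target. The $\SOC \leq_p \BCPM$ reduction uses the same gadget while enforcing, via the parity and weight bound, that the encoded $F$ consists of a single cycle of odd weight, which matches exactly the $\BCPM$ requirement of a bounded perfect matching of the correct parity.

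The main obstacle is the matching-to-cycle direction: designing the gadget so that cycles in $H$ are in bijection with $M^*$-alternating cycle packings in $G$ while preserving weights exactly. A naive contraction $G/M^*$ fails because two non-matching edges meeting at the same endpoint of a matching edge would produce non-alternating cycles in the contraction, breaking the correspondence. The bipartite case, where a consistent orientation automatically forces alternation, is the classical Papadimitriou--Yannakakis construction; extending this to general undirected graphs via the two-port gadget described above, while simultaneously verifying conservativity of $w'$, is the new technical content. The reverse directions are comparatively mechanical, being essentially standard adaptations of the ``perfect matching encodes cycle cover'' reduction, with the parity bookkeeping being the only addition needed for $\SOC \leq_p \BCPM$.
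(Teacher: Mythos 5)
Your high-level architecture (four weight-preserving reductions built on the symmetric-difference correspondence) matches the paper, and your cycles-to-matchings direction is a reasonable sketch in the same spirit as the paper's (the paper uses an edge gadget splitting each $e=\{u,v\}$ into a $6$-edge path together with vertex duplicates $v_1,v_2$, rather than your degree-dependent vertex gadget, but both are standard and workable). The genuine gap is in the matchings-to-cycles direction. Your entire construction rests on the assertion that for each matching edge $m=uv$ one can build a two-port gadget such that ``any cycle passing through the gadget is forced to traverse from one port to the other.'' You never construct this gadget, and it is precisely the object that does not exist for undirected graphs: if the port associated with $u$ is incident to two or more external (non-matching) edges, a cycle can enter and leave through that port using two external edges without ever touching the rest of the gadget, and splitting the port into one vertex per external edge does not help, since in an undirected gadget a path entering from one $u$-side copy can in general exit at another $u$-side copy. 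This is exactly the failure mode you yourself identify for the naive contraction $G/M^*$; renaming the contraction a ``two-port gadget'' does not remove it. Consequently your claim that $w'$ is ``automatically conservative'' also collapses: with $w'(e) = w(e) - \tfrac12(w(m(x))+w(m(y)))$, a non-alternating cycle (e.g.\ a triangle of non-matching edges whose covering matching edges are heavy) can have arbitrarily negative weight.

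The paper avoids the need for any such gadget. It keeps the graph $G$ unchanged and only reweights: $w'(e) = -w(e)-r-1$ for $e\in M$ and $w'(e)=w(e)+r+1$ for $e\notin M$, with $r\ge w(M)$. Alternation is then enforced \emph{by the weight budget} rather than structurally: a non-alternating cycle has strictly more non-matching than matching edges, so the $+(r+1)$ terms dominate and its weight exceeds $r - w(M)$, pushing it above the target $k'=k-w(M)$; and a negative cycle would have to be alternating, whence switching $M$ along it would contradict minimality of $M$, giving conservativity. This threshold idea is the essential content of the forward reductions and is absent from your proposal. A secondary, smaller issue: in \SOC\ $\leq_p$ \BCPM\ you cannot force the symmetric difference to be a \emph{single} odd cycle; the correct argument is that the cycles in $M\Delta M^*$ have odd total weight, hence at least one is odd, and conservativity of $w$ makes each of them non-negative, so that odd cycle has weight at most $k$.
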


As we will see, both equivalences use the same base construction and only differ in some minor configurations. The encoding of the weights is preserved in all reductions and hence \ECS\ is equivalent to \EM\ for weights encoded in unary, and \NP-complete for weights encoded in binary.

\paragraph*{From Matchings to Cycles}

We start with the direction of \EWPM\ $\leq_p$ \ECS\ and \BCPM\ $\leq_p$ {\SOC}, i.e.\ we give reductions starting from matching-type problems going to cycle-type problems.
We use roughly the same construction for both reductions. We remark that in the case of matching-type problems in \emph{bipartite graphs}, a similar reduction already exists, and it goes to cycle-type problems in \emph{directed} graphs \cite{papadimitriouComplexityRestrictedSpanning1982, elmaaloulyExactMatchingCorrect2023}. 
In these existing reductions,
the bipartition of the graph was used to define some edge orientation, which in turn ensured that cycles in the new graph would correspond to $M$-alternating cycles (where $M$ is an arbitrary minimum-weight perfect matching) in the old graph (see \cite{papadimitriouComplexityRestrictedSpanning1982} and \cite{elmaaloulyExactMatchingCorrect2023} for more details). 
The same trick does not apply in general graphs, which is why we need to reduce to undirected cycle problems instead. In particular, the new trick will be to manipulate
the weights such that the weight of any non-alternating cycle exceeds some threshold. 
The following lemma captures the key properties of this construction. 

\begin{restatable}{lemma}{lemmamatchingstocycles}
\label{lemma:matchings_to_cycles}
    Let $G = (V, E)$ be a graph with non-negative edge-weights 
    $w : E \rightarrow \N$ that admits a minimum-weight perfect matching $M$. 
    Consider the new edge-weights $w' : E \rightarrow \Z$ defined as
    \[
        w'(e) \coloneqq \begin{cases}
                            - w(e) - r - 1, & e \in M \\
                            w(e) + r + 1, & e \notin M
                        \end{cases}  
    \]
    for some integer $r \geq w(M)$. 
    Then every cycle $C$ with $w'(C) \leq r - w(M)$ is $M$-alternating
    and $w'$ is conservative.
\end{restatable}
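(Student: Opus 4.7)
}
The plan is to analyze the new weight $w'(C)$ of an arbitrary cycle $C$ by decomposing it according to how $C$ intersects $M$. Set $a \coloneqq \lvert C \cap M\rvert$ and $b \coloneqq \lvert C \setminus M\rvert$, and observe that
\[
    w'(C) \;=\; w(C \setminus M) - w(C \cap M) + (r+1)(b - a).
\]
The key structural observation I would use is that since $M$ is a matching, no two consecutive edges of a cycle $C$ can both belong to $M$ (they would share a vertex). Hence, reading the edges of $C$ around the cycle, each edge of $C \cap M$ must be followed by an edge of $C \setminus M$, which immediately gives $a \le b$, with equality if and only if $C$ alternates perfectly between $M$ and $E \setminus M$, i.e.\ $C$ is $M$-alternating.

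Next I would handle the first claim. Suppose $C$ is not $M$-alternating. Then $b > a$, so $b - a \ge 1$, and consequently $(r+1)(b-a) \ge r+1$. Using $w(C \setminus M) \ge 0$ and $w(C \cap M) \le w(M)$ (since $C \cap M \subseteq M$ and $w \ge 0$), I obtain
\[
    w'(C) \;\ge\; -w(M) + (r+1) \;=\; r - w(M) + 1 \;>\; r - w(M).
\]
Contrapositively, any cycle with $w'(C) \le r - w(M)$ satisfies $a = b$ and is $M$-alternating, proving the first part.

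For conservativity, I would split into the same two cases. If $C$ is not $M$-alternating, the bound just derived together with the hypothesis $r \ge w(M)$ gives $w'(C) \ge 1 > 0$. If $C$ is $M$-alternating, the $(r+1)(b-a)$ term vanishes, so $w'(C) = w(C \setminus M) - w(C \cap M) = w(M \Delta C) - w(M)$. Since $M \Delta C$ is again a perfect matching and $M$ is of minimum weight, this is non-negative.

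The whole argument is essentially bookkeeping once the observation $a \le b$ is in hand; the main conceptual step is recognizing that the additive offset $r+1$ on each edge is large enough to penalize any imbalance between $a$ and $b$ beyond what the original weights could compensate for, while simultaneously the minimality of $M$ takes care of the balanced case. I do not anticipate a real obstacle; the only point requiring care is to ensure that the inequalities $w(C \cap M) \le w(M)$ and $w(C \setminus M) \ge 0$ are applied in the right places, which is why the hypothesis $w \ge 0$ is needed.
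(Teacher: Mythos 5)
Your proposal is correct and follows essentially the same argument as the paper: the same decomposition of $w'(C)$ into the original-weight part and the $(r+1)(\lvert C\setminus M\rvert - \lvert C\cap M\rvert)$ offset, the same observation that a non-alternating cycle has strictly more non-matching than matching edges, and the same use of minimality of $M$ via switching along an alternating cycle to rule out negative cycles. No gaps; the bookkeeping and the places where $w\ge 0$ is invoked match the paper's proof.
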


Using Lemma~\ref{lemma:matchings_to_cycles}, we can now informally describe the reductions: Assume first that we are given an \EWPM-instance $I$ consisting of $G = (V, E)$, $w : E \rightarrow \Z$, and $k \in \Z$. 
Without loss of 
generality, we can assume that the weights $w$ are non-negative. We first compute 
a minimum-weight perfect matching $M$ in $G$. Note that if no such 
matching exists or its weight exceeds $k$, $I$ is a NO-instance and we can produce a 
simple NO-instance of \ECS. Otherwise, we construct $w'$ as in 
Lemma~\ref{lemma:matchings_to_cycles} (setting $r = k$) and return the \ECS-instance 
$I'$ consisting of $G$, $w'$ and $k' \coloneqq k - w(M)$. By Lemma~\ref{lemma:matchings_to_cycles}, $w'$ is 
conservative. Setting $r = k$ crucially guarantees that solutions to the \ECS-instance can be translated back to solutions to the \EWPM-instance $I$. 

The only difference in the reduction from \BCPM\ to \SOC\ is that we need to manually check the parity of $M$ and use a trivial construction if it already has correct parity.

\paragraph*{From Cycles to Matchings}
We will now turn our attention to the reverse reductions \ECS\ $\leq_p$ \EWPM\ and \SOC\ $\leq_p$ \BCPM. The general idea is again inspired by the existing reductions between bipartite matching and directed cycle problems (\cite{papadimitriouComplexityRestrictedSpanning1982, elmaaloulyExactMatchingCorrect2023}):
We want to 
duplicate every vertex and add edges between duplicates in order to ensure the existence of a canonical perfect matching $M$.
Additionally, we want to ensure that the existence of any other perfect matching $M'$ in the new graph implies existence of cycles in the original 
graph by looking at the symmetric difference $M \Delta M'$. Unfortunately, since our original graph
is not directed anymore, we cannot use the ideas from~\cite{papadimitriouComplexityRestrictedSpanning1982, elmaaloulyExactMatchingCorrect2023} to achieve this. Instead, we use the construction described in Lemma~\ref{lemma:cycles_to_matchings} and Figure~\ref{figure:cycle_to_matching}. 

\begin{figure}[htb]
    \centering
    \includegraphics[width=0.7\textwidth]{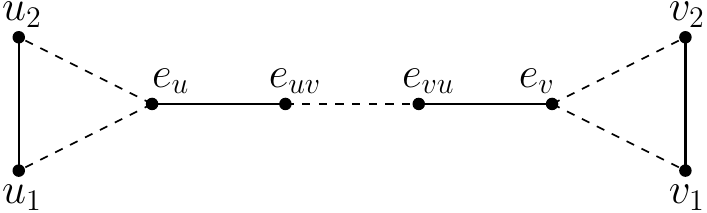} 
    \caption{This figure shows the construction from Lemma~\ref{lemma:cycles_to_matchings}
    for a single edge $e = \{u, v\}$. Solid edges indicate the perfect matching 
    $M$ from Lemma~\ref{lemma:cycles_to_matchings}. The edge $\{e_{uv}, e_{vu}\}$
    receives the weight of $e$ while all other edges get a weight of zero.}
    \label{figure:cycle_to_matching}
\end{figure}

\begin{restatable}{lemma}{lemmacyclestomatchings}
\label{lemma:cycles_to_matchings}
    Let $G = (V, E)$ be a graph with conservative edge-weights $w : E \rightarrow \Z$.
    We define
    \begin{align*}
        V' \coloneqq &\{v_1, v_2 \, | \, v \in V\} \cup \{e_u, e_{uv}, e_{vu}, e_v \, | \, e = \{u, v\} \in E\} \\
        E' \coloneqq &\{ \{v_1, v_2\} \, | \, v \in V\} \\
        &\cup \{\{u_1, e_u\}, \{u_2, e_u\}, \{e_u, e_{uv}\},  \{e_{uv}, e_{vu}\}, \\ 
        &\qquad \{e_{vu}, e_v\}, \{e_v, v_1\}, \{e_v, v_2\} \, | \, e = \{u, v\} \in E\} \\
        w' \coloneqq &e' \in E' \mapsto \begin{cases}
            w(e) & \text{if } e' = \{e_{uv}, e_{vu}\} \text{ for some } e = \{u, v\} \in E \\
            0 & \text{otherwise}
            \end{cases}  \\
        M \coloneqq &\{ \{v_1, v_2\} \, | \, v \in V\} \cup \{\{e_u, e_{uv}\}, \{e_{vu}, e_v\} \, | \, e = \{u, v\} \in E\}
    \end{align*}
    and consider the graph $G' = (V', E')$ with edge-weights $w'$ and perfect matching 
    $M$. Then there exist vertex-disjoint cycles $C_1, \dots, C_p$ of total weight 
    $k$ (w.r.t.\ $w$) in $G$ if and only if there exist vertex-disjoint $M$-alternating 
    cycles $C'_1, \dots, C'_p$ of total weight $k$ (w.r.t.\ $w'$) in $G'$.
\end{restatable}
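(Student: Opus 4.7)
The plan is to prove both directions of the equivalence by a local analysis of the gadget structure. Each edge $e = \{u, v\} \in E$ is replaced by a "thick path" $u_i - e_u - e_{uv} - e_{vu} - e_v - v_j$ (with $i, j \in \{1, 2\}$) whose only non-matching interior edge $\{e_{uv}, e_{vu}\}$ carries weight $w(e)$, while each vertex $v \in V$ is replaced by the matching edge $\{v_1, v_2\}$ that serves as a bridge between the incident edge gadgets. Since only the middle edges $\{e_{uv}, e_{vu}\}$ are weighted in $G'$, tracking the $w'$-weight of any subgraph reduces to tracking which edge gadgets are fully traversed.

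For the forward direction, given vertex-disjoint cycles $C_1, \ldots, C_p$ in $G$ of total weight $k$, I would orient each $C_i$ arbitrarily and construct a corresponding $C'_i$ in $G'$ as follows: for each edge $e = \{u, v\} \in C_i$ oriented from $u$ to $v$, include the path $u_2 - e_u - e_{uv} - e_{vu} - e_v - v_1$, and for each $v \in V(C_i)$ include the matching edge $\{v_1, v_2\}$ bridging the two incident gadgets. A routine case check at each type of vertex verifies that $C'_i$ is $M$-alternating; by construction $w'(C'_i) = w(C_i)$; and vertex-disjointness of the $C'_i$ follows directly from vertex-disjointness of the $C_i$.

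For the reverse direction, I would show that every $M$-alternating cycle $C'_i$ in $G'$ arises in exactly this way. The argument is a sequence of local deductions. Since $e_{uv}$ and $e_{vu}$ each have degree $2$ in $G'$, if $C'_i$ visits either, it must use both incident edges, and alternation at $e_u$ and $e_v$ then forces $C'_i$ to extend into the entire path from some $u_i$ to some $v_j$. Crucially, $C'_i$ cannot use both non-matching edges $\{v_1, e_v\}$ and $\{v_2, e_v\}$, since that would give $e_v$ two non-matching edges in $C'_i$, violating alternation. Symmetrically, once $C'_i$ uses a non-matching edge $\{v_j, e_v\}$, alternation at $v_j$ forces the matching edge $\{v_1, v_2\}$, and alternation at $v_{3-j}$ then forces a second non-matching edge $\{v_{3-j}, f_v\}$ for some $f \neq e$ incident to $v$. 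Collecting the edges whose gadgets are fully traversed by $C'_i$ yields a closed walk in $G$; since each of $v_1, v_2$ appears at most once in $C'_i$, this walk visits every original vertex at most once and is hence a simple cycle. Weights are preserved, and vertex-disjointness transfers back from $G'$ to $G$.

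The main technical obstacle is the case analysis underlying the reverse direction: one must verify that no "degenerate" $M$-alternating cycles exist in $G'$ that fail to correspond to a valid simple cycle in $G$, such as the would-be short cycle $v_1 - v_2 - e_v - v_1$ (blocked by alternation at $e_v$), or a cycle that somehow folds back through the same edge gadget. Once these local patterns are ruled out by the alternation constraint, the bijection between fully-traversed edge gadgets in $C'_i$ and edges of the corresponding cycle in $G$ makes the rest of the argument essentially mechanical.
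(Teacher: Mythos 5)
Your proposal is correct and follows essentially the same route as the paper: the forward direction builds the same oriented gadget paths linked by the $\{v_1,v_2\}$ matching edges, and the reverse direction recovers the original cycles from the fully traversed edge gadgets, using alternation at $e_u$, $e_v$, and $v_1,v_2$ to rule out degenerate configurations. Your local case analysis is in fact somewhat more explicit than the paper's about why each recovered $C_i$ is a simple cycle, and your observation that an alternating cycle touching $e_v$ must use both $v_1$ and $v_2$ is exactly the fact the paper uses to transfer vertex-disjointness back to $G$.
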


Let us now briefly explain how this yields the desired reductions. Assume that we are given an \ECS-instance $I$ consisting of $G = (V, E)$, $w : E \rightarrow \Z$, and $k \in \Z$. We use the construction from Lemma~\ref{lemma:cycles_to_matchings} to obtain a graph $G'$ with edge-weights $w'$ and return the \EWPM-instance $I'$ consisting of $G' = (V', E')$, $w'$, and $k' \coloneqq k$. The symmetric difference $M \Delta M'$ of $M$ with any perfect matching $M'$ in $G'$ yields a set of vertex disjoint cycles that can be translated back to $G$ using Lemma~\ref{lemma:cycles_to_matchings}. The reduction from \SOC\ to \BCPM\ uses the same construction, but we need to additionally observe that we can get a single odd-weight cycle that translates through Lemma~\ref{lemma:cycles_to_matchings}.

\section{Final Remarks}
\label{sec:final_remarks}

In Section~3, we proved that we can deterministically find an $\ell$-th smallest perfect matching in time $n^{\bigO(\ell)}$. This also allows us to solve certain instances of \EWPM\ efficiently: Assume e.g.\ that we are given an \EWPM\ instance with $k$ such that $k_1 < k_2 <  \dots < k_\ell = k$ are all the distinct weights up to $k$ that perfect matchings in $G$ can take. Then our algorithm can find a perfect matching of weight $k = k_\ell$ efficiently if $\ell$ is small. 

Clearly, this also works in the case of \BCPM. Moreover, our reductions from Section~\ref{sec:related_cycle_problems} also allow us to get similar guarantees for \ECS\ and \SOC, respectively. Indeed, assume that we are given an instance of \ECS\ or \SOC\ with $k$ such that $k_1 < k_2 <  \dots < k_\ell = k$ are all the distinct weights up to $k$ that cycle sums in $G$ can take. We apply our reduction to build a corresponding \EWPM\ or \BCPM\ instance. By Lemma~\ref{lemma:cycles_to_matchings}, our reductions have a one-to-one correspondence between $\ell$-th smallest perfect matchings and $\ell$-th smallest cycle sums. Hence, the parameterized runtime guarantees of the \lBPM-algorithm carry through the reduction. 
\begin{corollary}
\label{corollary:parametrized_algo_+_reduction}
    \BCPM\ can be solved in time $n^{\bigO(\ell)}$ if $k_1 < k_2 <  \dots < k_\ell = k$ are all the distinct weights up to $k$ that perfect matchings in $G$ can take. \ECS\ and \SOC\ can be solved in time $n^{\bigO(\ell)}$ if $k_1 < k_2 <  \dots < k_\ell = k$ are all the distinct weights up to $k$ that cycle sums in $G$ can take.
\end{corollary}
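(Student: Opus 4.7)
The plan is to derive all three results by combining the enumeration algorithm implicit in Theorem~\ref{theorem:l-best} with the cycle-to-matching reductions from Section~\ref{sec:related_cycle_problems}. For \BCPM, I would first run the enumeration: iterate over all $F \subseteq E$ with $|F| \le 2(\ell - 1)$ and for each such $F$ compute a minimum-weight perfect matching that contains $F$ in polynomial time (via standard \MWPM). This produces a list of $n^{\bigO(\ell)}$ candidate matchings in total time $n^{\bigO(\ell)}$. By Theorem~\ref{theorem:l-best}, for every $\ell' \le \ell$ at least one $\ell'$-th smallest perfect matching appears in this list, and since the assumption $k_1 < k_2 < \cdots < k_\ell = k$ guarantees that every perfect matching of weight at most $k$ has weight $k_{\ell'}$ for some $\ell' \le \ell$, the list contains a representative of every achievable weight up to $k$. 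It then remains to scan the list and return YES iff some matching has weight at most $k$ and the same parity as $k$.

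For \ECS\ and \SOC, the plan is to apply the reductions from Lemma~\ref{lemma:cycles_to_matchings} in order to obtain an \EWPM\ or \BCPM\ instance together with a canonical perfect matching $M$ of weight zero, and then argue that the parameter $\ell$ is preserved. Every perfect matching $M'$ of the constructed graph $G'$ can be written as $M \Delta (C_1' \cup \cdots \cup C_p')$ for an $M$-alternating collection of vertex-disjoint cycles, and by Lemma~\ref{lemma:cycles_to_matchings} these correspond bijectively and weight-preservingly to vertex-disjoint cycle sets in $G$; since $w'(M) = 0$, we have $w'(M') = w(C_1) + \cdots + w(C_p)$. Consequently the multiset of perfect-matching weights in $G'$ coincides with the multiset of cycle-sum weights in $G$, where the empty cycle set corresponds to $M$ itself with weight zero. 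So if only $\ell$ distinct cycle-sum weights occur up to $k$, only $\ell$ distinct PM weights occur up to $k$ in $G'$, and the paragraph above (or, for \ECS, its natural variant for \EWPM\ that replaces the parity check with a test for weight equal to $k$) yields the claimed $n^{\bigO(\ell)}$ algorithm after post-composition with the reduction.

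The main obstacle I anticipate is not algorithmic but a careful bookkeeping check that the reduction transports $\ell$ without shifting or duplicating weight values, and in particular that no perfect matching of $G'$ has a weight lying outside the image of the cycle-sum bijection. Both points follow from Lemma~\ref{lemma:cycles_to_matchings}, which forces every perfect matching of $G'$ to arise from an $M$-alternating collection, together with the fact that $M$ itself has weight zero. Once this is verified, the runtime bound $n^{\bigO(\ell)}$ transfers verbatim through the reductions and the corollary follows.
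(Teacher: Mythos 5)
Your proposal is correct and follows essentially the same route as the paper: enumerate all $F \subseteq E$ with $|F| \le 2(\ell-1)$, use Theorem~\ref{theorem:l-best} to guarantee that the resulting candidate list covers every achievable perfect-matching weight up to $k$ (which settles \BCPM\ by a weight-and-parity scan), and transfer the parameter to \ECS\ and \SOC\ via the weight-preserving correspondence of Lemma~\ref{lemma:cycles_to_matchings} together with the fact that the canonical matching $M$ has weight zero. The paper's own argument is exactly this sketch, phrased as a one-to-one correspondence between $\ell$-th smallest perfect matchings and $\ell$-th smallest cycle sums.
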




\bibliography{references}

\appendix

\section{Proofs of Section~\ref{sec:related_cycle_problems}}
\label{appendix:cycles_equiv_matchings}

This section contains the proofs that were omitted in Section~\ref{sec:related_cycle_problems}. Recall that the overall goal of Section~\ref{sec:related_cycle_problems} is to prove the following theorem.

\theoremcyclesequivmatchings*

Section~\ref{sec:related_cycle_problems} contains an informal description of the proof. Here, we prove both directions in detail. Concretely, we prove $\leq_p$ in Section~\ref{appendix:matchings_to_cycles} and $\geq_p$ in Section~\ref{appendix:cycles_to_matchings}.

\subsection{Matchings to Cycles}
\label{appendix:matchings_to_cycles}

Recall that the main tool for the $\leq_p$-direction is the following lemma. 

\lemmamatchingstocycles*
\begin{proof}
    We start with an indirect proof for the first part of the statement: Let $C$ 
    be a cycle in $G$ that is not $M$-alternating. Since $M$ is a matching, we 
    must have $\abs{C \cap M} < \abs{C \setminus M}$. Thus, we get a lower bound 
    of
    \begin{align*}
        w'(C) &= w'(C \cap M) + w'(C \setminus M) \\
        &\geq -w(M) - \abs{C \cap M} (r + 1)    
        + \abs{C \setminus M} (r + 1) \\
        &\geq r + 1 - w(M)
    \end{align*}
    for the weight of $C$. This proves the first part of the statement.

    For the second part, assume for a contradiction that there is a cycle $C$ in 
    $G$ of negative weight $w'(C) < 0$. By the above argument, $C$ must be $M$-alternating
    and in particular, we have  $\abs{C \cap M} = \abs{C \setminus M}$.
    Hence, $M' \coloneqq M \Delta C$ is a perfect matching of weight 
    \[
        w(M') = w(M) - w(C \cap M) + w(C \setminus M) = w(M) + w'(C) < w(M)
    \]
    which contradicts the minimality of $M$. We conclude that $w'$ is conservative.
\end{proof}

With Lemma~\ref{lemma:matchings_to_cycles}, we are now ready to prove
\EWPM\ $\leq_p$ \ECS\ and \BCPM\ $\leq_p$ \SOC. For completeness sake, we repeat the description of the construction from Section~\ref{sec:related_cycle_problems} in the proof.

\begin{proof}[Proof of $\leq_p$ in Theorem~18]
We start by proving \EWPM\ $\leq_p$ \ECS. The proof of \BCPM\ $\leq_p$ \SOC\ will be similar. 

Assume we are given an \EWPM-instance $I$ consisting of $G = (V, E)$, $w : E \rightarrow \Z$, and $k \in \Z$. 
Without loss of 
generality, we can assume that the weights $w$ are non-negative. We first compute 
a minimum-weight perfect matching $M$ in $G$. Note that if no such 
matching exists or its weight exceeds $k$, $I$ is a NO-instance and we can produce a 
simple NO-instance of \ECS. Otherwise, we construct $w'$ as in 
Lemma~\ref{lemma:matchings_to_cycles} (setting $r = k$) and return the \ECS-instance 
$I'$ consisting of $G$, $w'$ and $k' \coloneqq k - w(M)$. By Lemma~\ref{lemma:matchings_to_cycles}, $w'$ is 
conservative. Moreover, all of this can be done in polynomial time and hence 
it remains to argue correctness. 

To this end, assume first that $I$ is a 
YES-instance, i.e.\ there is a perfect matching $M^*$ in $G$ with 
$w(M^*) = k$. The symmetric difference $M \Delta M^*$ consists of 
vertex-disjoint cycles $C_1, \dots, C_p$. Moreover, by definition of $w'$,
we have $k = w(M^*) = w(M) + \sum_{i = 1}^p w'(C_i)$ and hence 
$\sum_{i = 1}^p w'(C_i) = k - w(M)$. Thus, $I'$ is a YES-instance.

Assume now for the other direction that $I'$ is a YES-instance, i.e.\ there 
exist cycles $C_1, \dots, C_p$ with $\sum_{i = 1}^p w'(C_i) = k - w(M)$.
Since $w'$ is conservative, the weight of each individual cycle is non-negative and 
in particular, we have $0 \leq w'(C_i) \leq k - w(M)$ for all $i \in [p]$. 
Hence, the cycles $C_1, \dots, C_p$ are all $M$-alternating by 
Lemma~\ref{lemma:matchings_to_cycles}. This again implies that 
$M^* \coloneqq M \Delta C_1 \Delta \dots \Delta C_p$ is a perfect matching 
in $G$. Observing that we have $w(M^*) = w(M) + \sum_{i = 1}^\ell w'(C_i)$, we 
conclude that $I$ is a YES-instance.

This concludes the proof of \EWPM\ $\leq_p$ \ECS, and we will move on to prove \BCPM\ $\leq_p$ \SOC. The reduction is very similar: Given a \BCPM-instance $I$ consisting of $G = (V, E)$, $w : E \rightarrow \Z$, and $k \in \Z$, 
we first compute a minimum-weight perfect matching $M$ and check $w(M) \equiv_2 k$.
If $w(M)$ indeed has the same parity as $k$, we can decide the instance by 
checking $w(M) \leq k$ and for the sake of the reduction, produce a simple 
YES-instance or NO-instance accordingly. Otherwise, we construct 
$I'$ as in the reduction \EWPM\ $\leq_p$ \ECS\ above. As all of this 
is doable in polynomial-time, it remains to argue correctness. 

For this, 
assume first that $I$ is a YES-instance, i.e.\ there is a perfect 
matching $M^*$ in $G$ with $w(M^*) \leq k$ and $w(M^*) \equiv_2 k$. 
It is clear that the construction is correct if $M$ has correct parity, so assume 
now $w(M) \nequiv_2 k$. The symmetric difference $M \Delta M^*$ consists 
of vertex-disjoint cycles $C_1, \dots, C_p$ of odd total weight 
$\sum_{i = 1}^p w'(C_i) = w(M^*) - w(M) \leq k - w(M)$. 
Hence, at least one of the cycles is 
of odd weight, let it be $C_i$. Recall that by Lemma~\ref{lemma:matchings_to_cycles},
$w'$ is conservative and hence we have $0 \leq w'(C_i) \leq k - w(M)$. We have 
established $1 \equiv_2 w'(C_i) \leq k - w(M)$ and hence $I'$ is a YES-instance. 

Assume now for the other direction that $I'$ is a YES-instance, i.e.\
there exists a cycle $C$ in $G$ with $1 \equiv_2 w'(C) \leq k - w(M)$. Recall 
that we assume $w(M) \nequiv_2 k$ as otherwise, the reduction clearly works out.
By Lemma~\ref{lemma:matchings_to_cycles}, $C$ must be $M$-alternating and 
hence $M^* \coloneqq M \Delta C$ is a perfect matching. Moreover, we have 
$w(M^*) = w(M) + w'(C) \leq k$ as well as $w(M^*) = w(M) + w'(C) \equiv_2 w(M) + 1 \equiv_2 k$.
Hence, $I$ is a YES-instance, as desired.
\end{proof}

\subsection{Cycles to Matchings}
\label{appendix:cycles_to_matchings}

Recall that the main tool for the $\geq_p$-direction is the following lemma.

\lemmacyclestomatchings*
\begin{proof}
    We prove the two directions separately. First, assume that there exist 
    vertex-disjoint cycles $C_1, \dots, C_p$ of weight $k$ in $G$. For convenience, 
    we orient each of the cycles arbitrarily. Now we define 
    \[
        C'_i \coloneqq \bigcup_{e = (u, v) \in C_i} \{ \{u_1, e_u\}, \{e_u, e_{uv}\}, 
        \{e_{uv}, e_{vu}\}, \{e_{vu}, e_v\}, \{e_v, v_2\}, \{v_2, v_1\}\}
    \]
    for all $i \in [p]$. By definition, the cycles $C'_1, \dots, C'_p$ are 
    $M$-alternating. Moreover, by construction of $w'$, we immediately have $w'(C'_i) = w(C_i)$
    and hence $C'_1, \dots, C'_p$ have total weight $k$. 
    It is left to prove that  $C'_1, \dots, C'_p$ are vertex-disjoint.
    Assume first to the contrary that 
    two distinct cycles $C'_i$ and $C'_j$ share a vertex $x \in V'$. If $x = v_1$
    or $x = v_2$ for some $v \in V$, then both $C_i$ and $C_j$ cover $v$ and hence
    they intersect. The alternative is that $x \in \{e_u, e_{uv}, e_{vu}, e_v\}$ for 
    some $e = \{u, v\} \in E$. But then both $C_i$ and $C_j$ must contain $e$ and 
    hence intersect. In both cases we come to a contradiction to the assumption that
    $C_i$ and $C_j$ are vertex-disjoint and we conclude that $C'_i$ and $C'_j$ must 
    be vertex-disjoint too.

    Assume now for the other direction that there exist vertex-disjoint 
    $M$-alternating cycles $C'_1, \dots, C'_p$ of weight $k$ in $G'$. We define 
    \[
        C_i \coloneqq \{ e = \{v, u\} \in E \, | \, \{e_{uv}, e_{vu}\} \in C'_i \}
    \]
    for all $i \in [p]$ and claim that $C_1, \dots, C_p$ satisfy the desired 
    properties. Again, $w(C_i) = w'(C'_i)$ follows directly from the definition of $w'$
    and $C_i$. Thus, it remains to prove that $C_1, \dots, C_p$ are vertex-disjoint.
    For this, we first observe the following property about the $M$-alternating 
    cycles $C'_1, \dots, C'_p$: For $v \in V$ and $i \in [p]$, we have 
    $v_1 \in C'_i$ if and only if $v_2 \in C'_i$. This property follows directly
    from the assumption that $C'_i$ is $M$-alternating and $M$ contains the edge $\{v_1, v_2\}$
    for all $v \in V$.

    Assume now for a contradiction that there exist distinct cycles 
    $C_i$ and $C_j$ that intersect 
    in vertex $v \in V$. Then there must exist edges $e = \{v, u\}$ and 
    $f = \{v, w\}$ in $E$ such that $\{e_{vu}, e_{uv}\} \in C'_i$ and 
    $\{f_{vw}, f_{wv}\} \in C'_j$. But by the definition of $G'$ and the fact that 
    $C'_i$ is a cycle, it must be the case that $C'_i$ covers $v_1 \in V'$
    or $v_2 \in V'$. Analogously, $C'_j$ must cover $v_1 \in V'$
    or $v_2 \in V'$ as well. In fact, by our preparatory observation, both cycles 
    cover both vertices $v_1$ and $v_2$ and hence they are not vertex-disjoint,
    a contradiction.
\end{proof}

Using Lemma~\ref{lemma:cycles_to_matchings}, we are now ready to provide the reductions \ECS\ $\leq_p$ \EWPM\ and \SOC\ $\leq_p$ \BCPM. Again, both reductions are very similar and we start with the former, repeating the construction that we already gave in Section~\ref{sec:related_cycle_problems}.

\begin{proof}[Proof of $\geq_p$ in Theorem~18]
We start by proving \ECS\ $\leq_p$ \EWPM\ and then move on to the proof of \SOC\ $\leq_p$ \BCPM. Again, the two reductions are very similar.

Assume that we are given an \ECS-instance $I$ consisting of $G = (V, E)$, $w : E \rightarrow \Z$, and $k \in \Z$.
We use the construction from Lemma~\ref{lemma:cycles_to_matchings} to obtain 
a graph $G'$ with edge-weights $w'$ and return the \EWPM-instance $I'$ consisting of $G' = (V', E')$, $w'$, and $k' \coloneqq k$.
Clearly, $G'$ and $w'$ can be found in polynomial time and it remains to 
prove correctness. 

Lemma~\ref{lemma:cycles_to_matchings} already does most of the work for 
us: If $I$ is a YES-instance, then there exist cycles $C_1, \dots, C_p$ 
of total weight exactly $k$ in $G$. By Lemma~\ref{lemma:cycles_to_matchings},
this implies the existence of $M$-alternating cycles $C'_1, \dots, C'_p$ (where 
$M$ is the perfect matching defined in Lemma~\ref{lemma:cycles_to_matchings}). 
The perfect matching $M^* \coloneqq M \Delta C'_1 \Delta \dots \Delta C'_p$ 
hence certifies that $I'$ is a YES-instance. 

Conversely, if $I'$ is a YES-instance, there must exist a perfect matching 
$M^*$ in $G'$ of weight exactly $k$. The symmetric difference $M \Delta M^*$ 
consists of vertex-disjoint $M$-alternating cycles. By Lemma~\ref{lemma:cycles_to_matchings},
this implies the existence of vertex-disjoint cycles $C_1, \dots, C_p$ in $G$. These cycles have the correct weight to certify that $I$ is a YES-instance.

This already concludes the proof of \ECS\ $\leq_p$ \EWPM. A small additional observation has to be made in order for 
the above construction to work for \SOC\ $\leq_p$ \BCPM: 
Without loss of generality, we can assume that $k$ is odd since we are looking for 
an odd cycle. As in \ECS\ $\leq_p$ \EWPM, 
the construction from Lemma~\ref{lemma:cycles_to_matchings} is now used to 
build the \BCPM-instance $I'$ consisting of $G'$, $w'$, and $k' = k$.

If $I$ is a YES-instance, there must exist a cycle $C$ in $G$ of 
odd weight $w(C) \leq k$. By Lemma~\ref{lemma:cycles_to_matchings}, this implies 
the existence of an $M$-alternating cycle $C'$ in $G'$ of the same weight,
i.e.\ $w'(C') = w(C) \leq k = k'$. The perfect matching $M^* \coloneqq M \Delta C'$ hence 
has weight exactly $w'(C')$ as $M$ has weight $0$. Since $w'(C')$ is odd and we have $w'(C') \leq k'$,
we conclude that $I'$ is a YES-instance.

Assume now for the other direction that $I'$ is a YES-instance. 
Then there exists a perfect matching $M^*$ in $G'$ of odd weight (since $k' = k$ is 
assumed to be odd) at most $k$. Looking at the symmetric difference $M \Delta M^*$,
we hence find an odd-weight $M$-alternating cycle $C'$ in $G'$. By Lemma~\ref{lemma:cycles_to_matchings},
this implies the existence of a cycle $C$ in $G$ of the same weight. Since 
we must have $0 \leq w'(C') \leq k' = k$, the same must be true for $w(C)$ and 
we conclude that $I$ is a YES-instance.
\end{proof}

\end{document}